\documentclass[letterpaper, 10 pt, conference]{cssconf}

\IEEEoverridecommandlockouts
\usepackage{amsmath}
\usepackage{amssymb}
\usepackage{hyperref}
\usepackage{booktabs}
\usepackage{graphicx}
\usepackage{tikz}
\usetikzlibrary{calc,shapes.geometric}

\newtheorem{remark}{Remark}
\newtheorem{problem}{Problem}
\newtheorem{definition}{Definition}
\newtheorem{ansatz}{Ansatz}
\newtheorem{proposition}{Proposition}
\newtheorem{example}{Example}
\newtheorem{lemma}{Lemma}
\newtheorem{condition}{Condition}

\newcommand{\deviation}[0]{\epsilon}

\newcommand{\Natural}[0]{\mathbb{N}}
\newcommand{\Whole}[0]{\mathbb{Z}}
\newcommand{\Real}[0]{\mathbb{R}}

\newcommand{\domT}[0]{\Natural}
\newcommand{\domX}[0]{\mathbb{X}}
\newcommand{\domU}[0]{\mathbb{U}}
\newcommand{\domY}[0]{\mathbb{Y}}
\newcommand{\data}[0]{\mathcal{D}}
\newcommand{\eval}[0]{\delta}

\newcommand{\outputspace}[0]{\mathcal{Y}}
\newcommand{\hypothesis}[0]{\mathcal{H}}
\newcommand{\prehypothesis}[0]{\mathcal{G}}
\newcommand{\sequencespace}[0]{\mathcal{S}}

\newcommand{\encoder}[0]{\mathsf{E}}
\newcommand{\encodertarget}[0]{\mathsf{E}_\target}
\newcommand{\model}[0]{\mathsf{M}}
\newcommand{\target}[0]{\mathsf{T}}
\newcommand{\world}[0]{\mathsf{L}}
\newcommand{\identity}[0]{\mathsf{I}}

\newcommand{\modelspace}[0]{\mathcal{M}}
\newcommand{\targetspace}[0]{\mathcal{T}}
\newcommand{\regularization}[0]{\mathsf{R}}
\newcommand{\error}[0]{\mathrm{e}}
\newcommand{\HS}[0]{\mathrm{HS}}
\NewDocumentCommand\Lp{O{2}O{}}{\ensuremath{L^{#1}_{#2}}}

\makeatletter
\let\orig@begintheorem\@begintheorem
\let\orig@opargbegintheorem\@opargbegintheorem
\def\@begintheorem#1#2{\par\addvspace{3pt plus 1pt minus 1pt}\orig@begintheorem{#1}{#2}}
\def\@opargbegintheorem#1#2#3{\par\addvspace{3pt plus 1pt minus 1pt}\orig@opargbegintheorem{#1}{#2}{#3}}
\def\@endtheorem{\endtrivlist\unskip\addvspace{3pt plus 1pt minus 1pt}}
\makeatother

\title{\LARGE \bf
Demystifying Linear Operator Learning for Control Systems}

\author{Max Beier, Nicolas Hoischen, Sandra Hirche, Petar Bevanda
\thanks{M. Beier, N. Hoischen, S. Hirche, and P. Bevanda are with the
Chair of Information-Oriented Control,
School of Computation, Information and Technology,
Technical University of Munich, 80333 Munich, Germany
        {\tt\small \{max.beier, nicolas.hoischen, hirche, petar.bevanda\}@tum.de}}%
}

\begin{document}

\maketitle
\thispagestyle{empty}
\pagestyle{empty}

\begin{abstract}
This paper proposes a structured approach to learning linear operators for control systems from data.
We address both structural and learning-theoretic aspects of the problem.
To derive structural assumptions, we propose using the well-established framework of (semi)groups for evolution equations, as operators in control systems are of the same type.
Further, we propose analyzing learning algorithms through the lens of the inverse problems framework.
This reveals how a learned model depends on the data via error decompositions, convergence guarantees, and optimal regularization -- enabling us to compare existing methods and derive \emph{provably} advantageous algorithms.
In order to obtain these results, we restrict our scope to bounded operators on Hilbert spaces.
Although this may appear restrictive, existing approaches often make this assumption implicitly to obtain matrix-like representations.
We demonstrate the power of using these frameworks by deriving a convergent estimator for time-varying systems.
\end{abstract}
\newlength\notewd
\newcommand\notetext{This is the authors unedited version of a paper accepted to the 65th IEEE Conference on Decision and Control.

©2026 IEEE. Personal use of this material is permitted. Permission from IEEE must be obtained for all other uses, in any current or future media, including reprinting/republishing this material for advertising or promotional purposes, creating new collective works, for resale or redistribution to servers or lists, or reuse of any copyrighted component of this work in other works.}
\settowidth\notewd{\notetext}
\ifdim\notewd>\dimexpr\textwidth-2pt\relax
\setlength\notewd{\dimexpr\textwidth-2pt\relax}
\fi

\begin{tikzpicture}[remember picture, overlay]
\node[anchor=north, align=flush left, inner xsep=2pt, outer sep=0pt,
      text width=\notewd, font=\footnotesize]
    at ([xshift=\dimexpr 1in+\hoffset+\oddsidemargin+0.5\textwidth\relax,
         yshift=-\dimexpr 1in+\voffset+\topmargin+\headheight+\headsep+\textheight+0.3cm\relax]
        current page.north west)
    {\notetext};
\end{tikzpicture}
\section{INTRODUCTION}
Making informed decisions requires an understanding of the effects one's actions will have.
This can be achieved by modeling the system one is trying to influence.
When a precise first-principles model is unavailable, one often resorts to data to uncover dependencies~\cite{bruntonDataDrivenScienceEngineering2019}.
In that case, one collects a dataset comprising observations, external inputs, and the system's response to those inputs.
This dataset can be used to build a model of the data-generating process, which, in turn, can be used to predict the effects of actions and thus make decisions.

Model classes, their properties, and decision-making from data have been extensively studied in systems and control theory~\cite{astromFeedbackSystemsIntroduction2008}.
The data described above is usually assumed to be generated by dynamics in a state space which contains the system's memory, such that a static function of the state can predict future states.
Direct access to a state is not required, allowing observables to be defined and evaluated on any domain, as long as it captures the dependence of output at different time steps.
In terms of partially observed deterministic systems, this can be achieved via delay embeddings~\cite{takensDetectingStrangeAttractors1981,bruntonChaosIntermittentlyForced2017}.
From a statistical viewpoint, this property is called sufficient statistics~\cite{kallenbergFoundationsModernProbability2021}.
Extensive learning and control theory exist for state-space models.

This paper will depart from this paradigm and investigate a different view of data and decisions. Embedding data in a feature space is at the core of machine learning~\cite{ingosteinwartSupportVectorMachines2008,bishopDeepLearningFoundations2024,bachLearningTheoryFirst2024} and is immensely successful across data modalities. Motivated by this success, we model the data as generated by evaluating an embedding and its evolved version. To allow for arbitrarily high-dimensional embeddings, we employ theory for infinite-dimensional feature spaces.
This view allows for defining a pure evolution model in feature space as a linear operator, which can be studied using techniques from functional analysis and group theory~\cite{aubinAppliedFunctionalAnalysis2000, engelOneParameterSemigroupsLinear2000}.

Viewing control systems from an operator-theoretic perspective holds promise for constructing novel automatic control algorithms with guarantees for systems that were previously only amenable to highly specific nonlinear or simplifying linear analysis, see~\cite{bevandaKoopmanOperatorDynamical2021} and references therein.
When the data is generated in a state space, the evolution operator has the special structure of a composition operator and is called the Koopman operator~\cite{koopmanHamiltonianSystemsTransformation1931}.
These models have gained widespread attention in the control community due to their (bi)linear structure, which is advantageous for optimization-based control~\cite{strasserOverviewKoopmanbasedControl2026}.
Motivated by the interest of control theorists in practicable models and the immense success of feature spaces in machine learning, we want to revisit the first principles of building operator models from data.
These first principles build on the theory of evolution operators~\cite{engelOneParameterSemigroupsLinear2000} and inverse problems~\cite{englRegularizationInverseProblems1996,stuartInverseProblemsBayesian2010}.
This approach contrasts with most of the control literature, where the operator model is defined in terms of a state-space model to `linearize' it~\cite {bevandaKoopmanOperatorDynamical2021, strasserOverviewKoopmanbasedControl2026}, and data and its distribution are treated as afterthoughts.

Within the general framework we outlined, we will demonstrate that an evolution operator admitting a composition structure via a state space is a special case of the evolution operator approach.
We recover existing models from the literature and pose control-theoretic questions in terms of the operator models.
When learning models from data, we show that commonly used estimators are ill-posed or suboptimal, and we propose estimators that admit provably optimal convergence to the best-approximate model.
In summary, our goal is to provide a framework that allows one to describe modeling approaches in a unified way and obtain and investigate estimators independent of the chosen model.
Design is then as straightforward as choosing the embedding, deriving the correct model, and extracting the estimator.
We will demonstrate this by deriving novel estimators for time-varying systems and by relating two adjacent approaches for control Koopman operator learning~\cite{bevandaNonparametricControlKoopman2025,haseliModelingNonlinearControl2023, nuskeFiniteDataErrorBounds2023}.

\emph{Organization}
Section~II reviews the background on learning from data, inverse problems, and linear evolution equations.
In Section~III, we develop the structural theory of linear evolution equations for time-varying control systems and discuss special cases, including the Koopman operator.
Section~IV formulates the learning problem as an inverse problem and invokes the framework to perform error decompositions and derive estimators.
We conclude in Section~V.

\emph{Notation} We use $\Natural, \Whole, \Real$ for natural (with zero), integer, and real numbers, respectively.
We use $x\in\mathbb{X}$ to define elements of a domain that is often a subset of $\Real^n$.
Mappings between them are denoted by $f:\domX\to\domY$.
Function spaces are denoted by $\mathcal{F}(\mathbb{X};\domY)$, where we omit the image if it is clear from context.
Operators between function spaces are denoted by $\mathsf{L}:\mathcal{F}\to\mathcal{G}$.
In particular, we use $\Lp[p][\mu](\mathbb{X})$ for the space of $p$-integrable functions on $\mathbb{X}$ with the measure $\mu$. We omit the measure when it is clear from context and do not specify the $\sigma$-algebra, but state when separability is needed.
We use $\ell_p(\mathbb{X})$ for the sequences spaces with norm $\|\cdot\|_p$. The tensor product of two spaces is denoted by $\mathcal{F}\otimes\mathcal{G}$.
Hilbert-Schmidt operators are denoted by $\HS(\mathcal{F}; \mathcal{G})$.
We use $\langle f| \eval \rangle$ to denote a dual pairing, that is, the evaluation of a function $f\in \mathcal{F}:\domX\to\domY$ by a functional $\eval\in\mathcal{F}^*: \mathcal{F} \to \domY$.
When describing inverse problems, we will denote $\encoder: \mathcal{F}\to \outputspace$ the encoder producing data. The population encoder $\encoder$ has its range in the measure space $\outputspace$ equipped with a measure $\mu$, for example $\Lp[2][\mu]$. The sample encoder $\encoder_N: \mathcal{F}\to \domY^N$ has its range in the vector space $\domY^N$ and produces samples distributed according to measure $\mu$.
The sample consists of $N$ evaluations $\{\langle f|\eval_i\rangle\}_{i=1}^N$, where $\mu_N=\frac{1}{N}\sum_{i=1}^N\eval_i$ is the empirical measure.

\section{BACKGROUND}
To set the stage for our framework, we will first introduce the problem of learning from data and then review the theory of inverse problems, which will be the basis for our analysis of learning algorithms.
We will then introduce the theory of linear evolution equations, which will serve as the basis for our structural investigation of the models we aim to learn.
\subsection{Learning from Data}\label{section:LearningFromData}
We shall start by defining what we mean by building models from data.
\begin{problem}[Learning from Operational Data]
Let $\domY$ be a set of observed variables and $\domU$ a set of exogenous inputs.
Given $\data=\{y_i, u_i, y'_i\}_{i=1}^N$, find a model $\mathsf{M} \in \modelspace$ such that
\begin{align}\label{equation:GenericInverseProblem}
    \mathsf{E}(\mathsf{M})=\mathsf{T},
\end{align}
the target $\mathsf{T} \in \targetspace$ encodes the successors $\{y'_i\}$
and the encoder $\mathsf{E}: \modelspace \to \targetspace$ encodes model predictions
at the data $(y_i, u_i)$.
\end{problem}
Naturally, the answer to this question depends on the assumptions we are willing to make about the encoder, model, and target at hand.
And the art of modeling is in posing \eqref{equation:GenericInverseProblem} well.
We will omit the braces in this paper, as we restrict ourselves to the case where $\mathsf{E}$ is a linear operator between Hilbert spaces, encompassing commonly used approaches.
A popular and successful approach is to assume that a static function of the vectors $y$ and $u$ from a hypothesis space $\hypothesis$ describes the vector $y'$.
We summarize this as Ansatz~\ref{ansatz:EvolutionOfOutput}.
\begin{ansatz}[Evolution of Output]\label{ansatz:EvolutionOfOutput}
    The data is generated by evaluating the function that describes evolved observations in terms of past observations and inputs:
    \begin{align*}
        y' & = \langle h' | \eval_{y,u} \rangle
        ,
    \end{align*}
    where $\eval_{y,u}: \hypothesis \to \domY$ is a vector-valued evaluation functional and $h'\in\hypothesis:\domY\times\domU\to \domY$.
\end{ansatz}
Another approach is modeling the data generated by evaluating an observable and its evolved version, which we summarize as Ansatz~\ref{ansatz:EvolutionOfObservable}.
\begin{ansatz}[Evolution of an Observable]\label{ansatz:EvolutionOfObservable}
    The data is generated by evaluating an observable and its evolved version. The evolution is a linear operator $\mathsf{M}: \prehypothesis \to \hypothesis$:
    \begin{align*}
        y  & = \langle h| \eval_{y}\rangle    &
        h' & =  \mathsf{M}h                   &
        y' & = \langle h'| \eval_{y,u}\rangle
        ,
    \end{align*}
    where $\eval_y: \prehypothesis\to\domY$, $h\in\prehypothesis:\domY\to\domY$ and $h'\in\hypothesis:\domY\times\domU\to\domY$.
\end{ansatz}
A concrete example of $\eval$ is the function evaluation in a reproducing kernel Hilbert space $\eval_y(h)=\langle h, k(\cdot, y)\rangle$.
Ansatz~\ref{ansatz:EvolutionOfOutput} is generally called function learning, while Ansatz~\ref {ansatz:EvolutionOfObservable} is called operator learning. Both are visualized in Figure~\ref{fig:LearningComparison}. The striking difference is how evolution and evaluation are treated.
Ansatz~\ref{ansatz:EvolutionOfOutput} makes $h'$ the fundamental object responsible for evolution and observation.
In contrast, Ansatz~\ref{ansatz:EvolutionOfObservable} decouples the responsibility for evolution and observation, allowing for the definition of a pure evolution model in terms of the operator $\mathsf{M}$ and an observation model $h$.
\subsection{Hilbert Spaces}\label{section:HilbertSpaces}
We adopt the informative setting of the process living in a Hilbert space and being governed by a bounded linear operator.
While operator theory largely applies to more general Banach spaces~\cite{engelOneParameterSemigroupsLinear2000}, Hilbert spaces admit a complete inner product, a salient property that allows for orthogonal projections, the Pythagorean theorem, and Hilbert-Schmidt operators, making the analysis of learning algorithms more tractable.
Operators on Hilbert spaces model large classes of systems, i.e., Stochastic Differential Equations~\cite{kallenbergFoundationsModernProbability2021}, and when chosen appropriately can approximate continuous observables arbitrarily well~\cite{ingosteinwartSupportVectorMachines2008}.
An important class of Hilbert spaces is the reproducing kernel Hilbert space (RKHS). It admits continuous function evaluations, enabling learning and generalization from data.
\begin{definition}[RKHS~{\cite[Definition 2.2]{kanagawaGaussianProcessesReproducing2025}}]\label{def:RKHS}
    Let $\domY$ be a nonempty set and $k: \domY \times \domY \to \Real$ a positive definite kernel.
    A Hilbert space $\hypothesis_k$ of functions on $\domY$ with inner product $\langle \cdot, \cdot \rangle_{\hypothesis_k}$ is called a \emph{reproducing kernel Hilbert space} (RKHS) with reproducing kernel $k$ if:
    \begin{enumerate}
        \item For all $y \in \domY$, we have $k(\cdot, y) \in \hypothesis_k$.
        \item For all $y \in \domY$ and $f \in \hypothesis_k$,
              \begin{align}
                  f(y) = \langle f, k(\cdot, y) \rangle. \label{eq:ReproducingProperty}
              \end{align}
    \end{enumerate}
\end{definition}
\subsection{Inverse Problems}\label{section:InverseProblems}
\begin{figure}[t]
    \centering
    \definecolor{TUMBlue}{RGB}{0, 101, 189}
    \def\latentsize{2.2cm}
    \def\latentsizecompact{0.96cm}
    \def\colgap{2.0}
    \def\rowgap{4.0}
    \def\headgap{1.5}
    \def\spacegap{1.6}
    \def\titlegap{2.0}
    \def\weightgap{0.5}
    \def\encysep{5pt}
    \def\encdepth{1.44cm}
    \pgfmathsetmacro{\encangle}{atan(4*\encysep/(\latentsize-\latentsizecompact))}
    \tikzset{
      ext/.style={trapezium, trapezium stretches body, trapezium angle=\encangle,
      draw=TUMBlue!80, fill=TUMBlue!15, inner ysep=\encysep, outer sep=0pt,
      minimum height=\encdepth, minimum width=\latentsize},
      latent/.style={rectangle, draw=TUMBlue!80, fill=TUMBlue!15,
      minimum height=\latentsizecompact, minimum width=\latentsizecompact},
      narrowrect/.style={rectangle, draw=TUMBlue!80, fill=TUMBlue!15,
      minimum height=\latentsize, minimum width=0.55cm},
      narrowrectcompact/.style={narrowrect, minimum height=\latentsizecompact},
      nextcol/.style={at={($(#1.center)+(\colgap,0)$)}},
      nextrow/.style={at={($(#1.center)+(0,-\rowgap)$)}},
      headerof/.style={align=center, at={($(#1.center)+(0,\headgap)$)}},
      spaceof/.style={font=\small, text=black, at={($(#1.center)+(0,-\spacegap)$)}},
      titleof/.style={align=right, anchor=east, at={($(#1.center)-(\titlegap,0)$)}},
      weightof/.style={anchor=west, at={($(#1.north)+(\weightgap,0)$)}},
    }
    \resizebox{\columnwidth}{!}{%
    \begin{tikzpicture}[thick]
      \node [narrowrect] (F_target) at (0,0) {};
      \node at (F_target) {\Large$\mathsf{T}$};
      \node [titleof=F_target] {\textbf{Function Learning}};
      \node [headerof=F_target] {\textbf{Label}};
      \node [spaceof=F_target] {$\mathcal{T}$};
      \node [nextcol=F_target] (F_approx) {\Huge$\approx$};
      \node [ext, rotate=-90, nextcol=F_approx] (F_enc) {};
      \node at (F_enc) {\Large$\mathsf{E}$};
      \node [narrowrectcompact, weightof=F_enc] (F_weights) {};
      \node at (F_weights) {\Large$\mathsf{M}$};
      \node [headerof=F_enc] {\textbf{Encoder}};
      \node [headerof=F_weights] {\textbf{Model}};
      \node [spaceof=F_enc] {$\mathcal{H}\to\mathcal{T}$};
      \node [spaceof=F_weights] {$\mathcal{H}$};
      \node [ext, rotate=-90, nextrow=F_target] (O_encL) {};
      \node at (O_encL) {\Large$\mathsf{T}$};
      \node [titleof=O_encL] {\textbf{Operator Learning}};
      \node [headerof=O_encL] {\textbf{Label}};
      \node [spaceof=O_encL] {$\mathcal{G}\to\mathcal{T}$};
      \node [nextcol=O_encL] (O_approx) {\Huge$\approx$};
      \node [ext, rotate=-90, nextcol=O_approx] (O_encR) {};
      \node at (O_encR) {\Large$\mathsf{E}$};
      \node [latent, weightof=O_encR] (O_weights) {};
      \node at (O_weights) {\Large$\mathsf{M}$};
      \node [headerof=O_encR] {\textbf{Encoder}};
      \node [headerof=O_weights] {\textbf{Model}};
      \node [spaceof=O_encR] {$\mathcal{H}\to\mathcal{T}$};
      \node [spaceof=O_weights] {$\mathcal{G}\to\mathcal{H}$};
      \path ([shift={(-10pt,-10pt)}]current bounding box.south west)
        rectangle ([shift={(10pt,10pt)}]current bounding box.north east);
    \end{tikzpicture}}%
    \caption{Function learning versus operator learning. \textbf{Top:} The target $\target$ is a vector of function evaluations. The model $\model$ is a function and is evaluated via the encoder $\encoder$. \textbf{Bottom:} The target $\target$ is the evaluation of the true operator's action when restricted to $\mathcal{G}$. The operator $\model$ models the action on functions in $\prehypothesis\rightarrow\hypothesis$ which are evaluated via the encoder $\encoder$.}
    \label{fig:LearningComparison}
\end{figure}
This exposition to the basic questions studied in inverse problems and theory follows the seminal work of~\cite{englRegularizationInverseProblems1996}.
We focus on the basic questions and the answers relevant to later sections and refer to~\cite{englRegularizationInverseProblems1996} for a complete coverage.
\begin{problem}[Inverse Problem~{\cite[p.31]{englRegularizationInverseProblems1996}}]\label{problem:InverseProblem}
Let the encoder $\encoder$ be a bounded operator between Hilbert spaces $\modelspace$ and $\targetspace$. Given the target $\target\in\targetspace$ find a model $\model\in\modelspace$ such that
\begin{align}\label{eq:InverseProblem}
    \encoder \model = \target
    .
\end{align}
This problem is called well-posed if the following hold.
\begin{itemize}
    \item[(E)] For all admissible data, a solution exists.
    \item[(U)] For all admissible data, the solution is unique.
    \item[(C)] The solution depends continuously on the data.
\end{itemize}
\end{problem}
The requirement (E) is usually addressed by ensuring the model class is sufficiently large or by relaxing the strict condition in \eqref{eq:InverseProblem}, which requires the encoder to be continuously invertible. One sensible way is to settle on the solution with the best data fit, the least-squares solution: $\inf_{\model\in\modelspace}\|\encoder \model-\target\|^2$. In case of multiple least-squares solutions, (U) can then be ensured by choosing the solution with the smallest norm, the so-called best-approximate solution $\|\model\|=\inf\{\|\tilde{\model}\| \mid \tilde{\model}\text{~is a least-squares solution}\}$.

The Moore-Penrose (MP) inverse of the encoder $\encoder^\dagger: \targetspace\to\modelspace$~\cite[Definition 2.2]{englRegularizationInverseProblems1996}, can be used to obtain the best-approximate solution $\model^\dagger:=\encoder^\dagger\target$. If $\target\in\text{domain}(\encoder^\dagger)$ the MP inverse operator is defined through the normal equations $\encoder^*\encoder \model=\encoder^*\target\implies\encoder^\dagger=(\encoder^*\encoder)^\dagger\encoder^*$ which are usually easier to solve. However, even for the normal equations, $\encoder^\dagger$ does not continuously depend on data~\cite{englRegularizationInverseProblems1996}.
This is because when $\encoder^\dagger$ is unbounded, an arbitrarily small deviation in the data $\|y^\deviation-y\|\leq\deviation$ can lead to arbitrary changes in the model.

The remedy is to introduce a regularization that ensures continuity while converging to the best-approximating solution as the level of deviation decreases.
\begin{definition}[Regularization~{\cite[Definition~3.1]{englRegularizationInverseProblems1996}}]\label{definition:RegularizationOperator}
    For an inverse problem defined in Problem~\ref{problem:InverseProblem}, let $\gamma_0\in (0,\infty]$. For every $\gamma\in (0,\gamma_0)$, let $\regularization_\gamma:\targetspace\to\modelspace$ be a continuous (not necessarily linear) operator. The family $\{\regularization_\gamma\}$ is called a regularization, if, for all $\target\in \operatorname{domain}(\encoder^\dagger)$, there exists a parameter choice rule $\alpha(\epsilon, \target^\epsilon)$ such that
    \begin{equation*}
    \ensuremath{\lim_{\epsilon\to 0}\sup_{\target^\epsilon\in\targetspace} \{\|\regularization_{\alpha (\epsilon, \target^\epsilon)}\target^\epsilon-\encoder^\dagger\target\|,\text{~where~}\|\target^\epsilon-\target\|\leq\epsilon\}=0.}
    \end{equation*}
\end{definition}

Both the unboundedness issues of the best-approximation and important regularization methods can be understood by calculating them in terms of the spectrum of $\encoder^*\encoder=\int \lambda dP_\lambda$:
$\model^\dagger=(\encoder^*\encoder)^{-1}\encoder^*\target=\int\frac{1}{\lambda}dP_\lambda \encoder^*\target$, where $P_\lambda$ is a spectral projection.
One immediately observes that if $\encoder^*\encoder$ is rank deficient, the MP inverse is unbounded.
For operators on infinite-dimensional spaces, the same issue arises due to compactness. It is thus a natural approach to define regularizers in terms of the spectrum:
$\model^{\gamma}=\int g_\gamma({\lambda})dP_\lambda \encoder^*\target$. This approach encompasses two commonly used regularizers: Tikhonov, $g^{\text{Tikhonov}}_\gamma(\lambda)=\frac{1}{\lambda+\gamma}$, and principal components~\cite{kosticLearningDynamicalSystems2022}, $g^{\text{PC}}_\gamma(\lambda)=\{
    \frac{1}{\lambda} \text{~if~}\lambda {\geq} \gamma, \text{else~}0\}$. A variant of $g^{\text{PC}}_\gamma$ is also used in the original Extended Dynamic Mode Decomposition (EDMD)~\cite{williamsDataDrivenApproximationKoopman2015,o.williamsKernelbasedMethodDatadriven2015}.
\subsection{Linear Evolution Equations}
This section introduces the theory of linear evolution equations, forming the basis for our structural investigation of the models.
Based on this framework, one can choose which objects to learn, which to analyze, and how to relate them.
We follow the seminal work of Engel and Nagel~\cite{engelOneParameterSemigroupsLinear2000}, which provides a comprehensive theory of the topic.
\begin{definition}[Semigroup~{\cite{engelOneParameterSemigroupsLinear2000}}]\label{def:Semigroup}
    Let $\mathcal{F}$ be a Hilbert space.
    A family of bounded linear operators $\{\mathsf{S}(t)\}_{t \geq 0}$ on $\mathcal{F}$ is called a semigroup if it satisfies the following properties:
    \begin{itemize}
        \item $\mathsf{S}(0) = \identity$, where $\identity$ is the identity operator on $\mathcal{F}$.
        \item $\mathsf{S}(t + s) = \mathsf{S}(t)\mathsf{S}(s)$ for all $t, s \geq 0$.
        \item $t \mapsto \mathsf{S}(t)f$ is continuous for each fixed $f \in \mathcal{F}$.
    \end{itemize}
\end{definition}
We will restrict our attention to the semigroups $t,s\in \Natural$ that resemble the discrete-time evolution of a system.
We note that in this case the semigroup property implies that $\mathsf{S}(t) = \mathsf{S}(1)^t$ for all $t \in \Natural$, so the semigroup's generator is $\mathsf{A}:=\mathsf{S}(1)$.
\begin{definition}[Evolution Families~{\cite[Definition 9.2]{engelOneParameterSemigroupsLinear2000}}]\label{def:EvolutionFamily}
    Let $\mathcal{F}$ be a Hilbert space.
    A family of bounded linear operators $\{\mathsf{U}(t,s)\}_{t \geq s \in\Whole}$ on $\mathcal{F}$ is called an evolution family if it satisfies the following properties:
    \begin{enumerate}
        \item $\mathsf{U}(s,s) = \identity$ for all $s \geq 0$.
        \item $\mathsf{U}(t,r) = \mathsf{U}(t,s)\mathsf{U}(s,r)$ for all $t \geq s \geq r \geq 0$.
        \item $(t,s) \mapsto \mathsf{U}(t,s)f$ is continuous for each fixed $f \in \mathcal{F}$.
    \end{enumerate}
\end{definition}
In analogy to semigroups, we can define the generator as $\mathsf{A}(t):= \mathsf{U}(t+1, t)$, which, through property 2, determines the evolution family.
\begin{lemma}[Evolution Semigroup~{\cite[Lemma 9.10]{engelOneParameterSemigroupsLinear2000}}]\label{lemma:EvolutionSemigroup}
    Equation~\eqref{eq:EvolutionSemigroup} defines a semigroup $\{\mathsf{G}(t)\}_{t \geq 0}$ on sequences $\xi\in\sequencespace:=\ell_\infty(\Whole, \mathcal{F})$, called the evolution semigroup.
    \begin{align}\label{eq:EvolutionSemigroup}
        [\mathsf{G}(t)\xi](s) =\mathsf{U}(s, s-t)\xi (s-t)
    \end{align}
    The generator of the evolution semigroup is defined in terms of the generator of the evolution family and the right shift operator $\mathsf{R}:\sequencespace\to\sequencespace$ as $\mathsf{A}_{\text{e}}:=\mathsf{G}(1) =\mathsf{A}(\cdot-1)\mathsf{R}$.
\end{lemma}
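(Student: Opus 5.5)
The plan is to verify the three semigroup axioms of Definition~\ref{def:Semigroup} directly from the properties of the evolution family in Definition~\ref{def:EvolutionFamily}, restricted to $t\in\Natural$. Before that, we check that each $\mathsf{G}(t)$ is a bounded linear operator on $\sequencespace=\ell_\infty(\Whole,\mathcal{F})$. Linearity is immediate from the linearity of each $\mathsf{U}(s,s-t)$. For boundedness, we have $\|[\mathsf{G}(t)\xi](s)\|\le \|\mathsf{U}(s,s-t)\|\,\|\xi\|_\infty$, so we need $\sup_{s\in\Whole}\|\mathsf{U}(s,s-t)\|<\infty$ for each fixed $t$. We will assume the evolution family is exponentially bounded, $\|\mathsf{U}(t,s)\|\le Me^{\omega(t-s)}$, as in \cite{engelOneParameterSemigroupsLinear2000}. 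Equivalently, in discrete time, $\sup_s\|\mathsf{A}(s)\|<\infty$, since $\mathsf{U}(s,s-t)=\mathsf{A}(s-1)\cdots\mathsf{A}(s-t)$. This uniform bound is the one genuinely extra hypothesis, and identifying it is the main obstacle. Without it, $\mathsf{G}(t)$ need not map $\ell_\infty$ into itself, so I would state it explicitly as an implicit standing assumption of the lemma.

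Next come the algebraic properties. For $t=0$, $[\mathsf{G}(0)\xi](s)=\mathsf{U}(s,s)\xi(s)=\xi(s)$ by property~1, so $\mathsf{G}(0)=\identity$. For the semigroup law, we compute
\begin{align*}
[\mathsf{G}(t)\mathsf{G}(r)\xi](s)&=\mathsf{U}(s,s-t)[\mathsf{G}(r)\xi](s-t)\\
&=\mathsf{U}(s,s-t)\mathsf{U}(s-t,s-t-r)\xi(s-t-r)\\
&=\mathsf{U}(s,s-t-r)\xi(s-t-r),
\end{align*}
where the last step uses property~2 with $s\ge s-t\ge s-t-r$. The result equals $[\mathsf{G}(t+r)\xi](s)$. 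Here property~2 must be read for all integers, not only nonnegative ones, which is consistent with the index set $t\ge s\in\Whole$. Continuity in $t$ is trivial because $t\in\Natural$ carries the discrete topology, so this axiom is void in the discrete-time restriction adopted after Definition~\ref{def:Semigroup}.

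Finally, we identify the generator. By the remark after Definition~\ref{def:Semigroup}, the generator is $\mathsf{G}(1)$, and $[\mathsf{G}(1)\xi](s)=\mathsf{U}(s,s-1)\xi(s-1)=\mathsf{A}(s-1)\xi(s-1)$ by the definition $\mathsf{A}(t)=\mathsf{U}(t+1,t)$. With the right shift $(\mathsf{R}\xi)(s)=\xi(s-1)$, this is $[\mathsf{A}(\cdot-1)\mathsf{R}\xi](s)$, where $\mathsf{A}(\cdot-1)$ acts as the multiplication operator $(\mathsf{A}(\cdot-1)\eta)(s)=\mathsf{A}(s-1)\eta(s)$. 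This gives $\mathsf{A}_{\text{e}}=\mathsf{A}(\cdot-1)\mathsf{R}$. The multiplication operator is bounded on $\ell_\infty$ under the same uniform bound on $\mathsf{A}$ as in the first step, which closes the argument.
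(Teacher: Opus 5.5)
Your proposal is correct and follows essentially the paper's route. The paper gives no proof of this lemma itself and defers to Engel--Nagel, but its proof of the Control Evolution Semigroup proposition is exactly your computation: $\mathsf{U}(s,s)=\identity$ for the identity, and $\mathsf{U}(s,s-t-r)=\mathsf{U}(s,s-t)\mathsf{U}(s-t,s-t-r)$ for the semigroup law.

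Your additional checks, which the paper skips, are sound and worth keeping. $\mathsf{G}(1)$ maps $\ell_\infty(\Whole,\mathcal{F})$ into itself precisely when $\sup_s\|\mathsf{A}(s)\|<\infty$, which is the exponential bound you take from Engel--Nagel and which is missing from Definition~\ref{def:EvolutionFamily}. The evolution-family axioms must indeed hold on all of $\Whole$, not just for indices $\geq 0$. Finally, your check of $\mathsf{G}(1)=\mathsf{A}(\cdot-1)\mathsf{R}$ supplies the generator identity that the paper only asserts.
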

In summary, the framework of semigroups and evolution families allows capturing the structure of evolution equations at different levels of abstraction.
While we introduce descriptions for time-varying evolutions, we will see that control systems are a special parametrization of this class.
\begin{remark}\label{remark:ContinuousTime}
    Although we restrict ourselves to the discrete-time case, the semigroup and evolution family framework naturally applies to continuous-time systems.
    In this paper, we call $\mathsf{S}(1)$ the generator of the semigroup, in the straightforward group theoretic sense.
    In continuous time where the evolution operators are defined for all $t \in\Real_{0}^{+}$, the situation is more delicate, as no $\mathsf{S}(t_0)$ characterizes the semigroup.
    Instead, the generator of the semigroup is then given by the infinitesimal limit $\mathsf{A}=\lim_{h \to 0} \frac{\mathsf{S}(h) - I}{h}$.
\end{remark}
\section{Linear Time-Varying Evolution Equations}\label{section:LinearEvolutionEquationsForControl}
Motivated by the transition map of a dynamical system, we will derive the structure for control systems modeled by composition operators.
\subsection{General Structure}
\begin{definition}[Dynamical System]\label{def:DynamicalSystem}
    Let $\domY$ be a domain and $\varphi: \domY \to \domY$ the transition map on said domain. Defining the dynamical system $y_{t+1}=\varphi(y_t)$. Let $\varphi$ be non-singular and absolutely continuous, meaning $\mu(a)=0\implies\mu\circ \varphi^{-1}(a)=0$, and $\mu\circ \varphi^{-1}\ll\mu$.
\end{definition}
These conditions imply that sets of measure zero still have measure zero under the pullback -- excluding dynamical behavior like finite-time blow-up and collapse.

We will now change our view of the evolution from the vector being fed through the transition map to the direct evolution of the observation.
To this end, we impose that the observation comes from an observable $h\in\hypothesis$ evaluated via the evaluation functional $\eval$.
The composition with the transition map $\mathsf{K}: \hypothesis \to \hypothesis$ as defined in~\eqref{eq:EvolutionOfObservable} is the generator of the evolution semigroup.
\begin{align}\label{eq:EvolutionOfObservable}
    y_{t+1} = \langle h| \eval_y\rangle = h(\varphi(y_t)) = [\mathsf{K}h](y_t)=[\mathsf{S}(1)h](y_t)
    .
\end{align}
\begin{proposition}[Semigroup of a Dynamical System]\label{proposition:Semigroup}
    The operator $\mathsf{S}(1)$ defined in~\eqref{eq:EvolutionOfObservable} is a bounded linear operator on $\Lp(\domY)\to \Lp(\domY)$.
    It is the generator of a semigroup $\{\mathsf{S}(t)\}_{t \in \Natural}$ of evolution operators.

    \begin{proof}
        The operator $\mathsf{S}(1)$ is linear as it is a composition.
        Boundedness follows from non-singularity and $\mu\circ \varphi^{-1}\ll \mu$, as this implies that $\mathsf{S}(1)$ maps open sets to open sets~\cite[Definition II.1.1]{wernerFunktionalanalysis2018},\cite{stochelCompositionOperatorsL2spaces}.
        The semigroup property follows from $\mathsf{S}(s+t) h = h\underbrace{\circ \varphi \circ \cdots \circ \varphi}_{s+t~times} =\mathsf{S}(s)\mathsf{S}(t)h$ for all $s,t \in \Natural$ and by defining $\mathsf{S}(0) = \identity$, which is bounded in $\Lp$.
    \end{proof}
\end{proposition}
\begin{remark}
    If $\varphi^{-1}$ defines a semigroup in the same way as $\varphi$, then they can be combined to a \emph{group} of evolution operators.
    This is the case for invertible systems, but not for general forward complete systems.
\end{remark}
We will now perform an analogous construction for time-varying and control systems, and verify that autonomous systems are a special case of this construction.
\begin{definition}[Time-Varying System]\label{def:TVSystem}
    Let $\domY$ be the domain of the transition map $\varphi: \domT \times\domY \to \domY$ defining a dynamical system via $y_{t+1} = \varphi(t, y_t)$.
    Let $\forall t: \varphi(t,\cdot)$ be non-singular and absolutely continuous.
\end{definition}
We can again shift our view of the evolution from the transition to the evolution of the observation.
\begin{align}\label{eq:TVSystem}
    y_{t+1} = \langle h| \eval_y\rangle  = h(\varphi(t, y_t)) = [\mathsf{K}h](t, y_t)
    .
\end{align}
In this case, the composition operator $\mathsf{K}$ maps observables from $\Lp(\domY)$ to $\Lp(\domY \times \domT)$ via composition with the time-varying transition map $\varphi$. We observe that the natural time-varying extension of the evolution equations of composition operators does not admit the semigroup property $S(t+s)=S(t)S(s)$. Yet, by defining $\mathsf{G}(t)f=[\mathsf{K}f](t, \cdot)$ we can construct an evolution family as an instance of Definition~\ref{def:EvolutionFamily}.
\begin{lemma}[Evolution Family of a Time-Varying System]\label{lemma:EFamilyTVSystem}
    The operators $\mathsf{G}(t)=:\mathsf{U}(t+1,t)$ generate an evolution family $\{\mathsf{U}(t, s)\}_{t\geq s \in \Whole}$ of evolution operators on $\Lp(\domY)$.
    \begin{proofsketch}
        The proof immediately follows from the definitions by substituting~\eqref{eq:TVSystem} in Definition~\ref{def:EvolutionFamily}.
    \end{proofsketch}
\end{lemma}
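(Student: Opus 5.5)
The plan is to construct the evolution family directly from the one-step operators and then check the three properties of Definition~\ref{def:EvolutionFamily} one by one.

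\emph{Step 1: one-step operators.} For each $t$, write $\varphi_t:=\varphi(t,\cdot)$. By Definition~\ref{def:TVSystem}, every $\varphi_t$ is non-singular and absolutely continuous. The argument of Proposition~\ref{proposition:Semigroup} therefore applies to each $\varphi_t$ separately. It shows that $\mathsf{G}(t)f=f\circ\varphi_t=[\mathsf{K}f](t,\cdot)$ is a bounded linear operator on $\Lp(\domY)$. The only change from the autonomous case is that the composition map now depends on the frozen time $t$.

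\emph{Step 2: define the family.} For $t>s$ set
\begin{align*}
\mathsf{U}(t,s):=\mathsf{G}(s)\mathsf{G}(s+1)\cdots\mathsf{G}(t-1),
\end{align*}
and $\mathsf{U}(s,s):=\identity$. The order of the factors is forced by pullback semantics. Applying the operator gives
\begin{align*}
\mathsf{U}(t,s)h=h\circ\varphi_{t-1}\circ\cdots\circ\varphi_s ,
\end{align*}
which is exactly the observable that propagates $y_s$ to $h(y_t)$ via~\eqref{eq:TVSystem}. Each $\mathsf{U}(t,s)$ is a finite product of bounded operators, so it is bounded. By construction $\mathsf{U}(t+1,t)=\mathsf{G}(t)$.

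\emph{Step 3: check the axioms.}
\begin{itemize}
\item Property 1 holds by definition.
\item Property 2 follows from associativity of composition:
\begin{align*}
h\circ\varphi_{t-1}\circ\cdots\circ\varphi_r=(h\circ\varphi_{t-1}\circ\cdots\circ\varphi_s)\circ\varphi_{s-1}\circ\cdots\circ\varphi_r .
\end{align*}
In operator form this is $\mathsf{U}(s,r)$ applied to $\mathsf{U}(t,s)h$. The composition order must be kept straight here, because pullbacks reverse the order in which the maps act. This means the identity should be read with the paper's convention for $\mathsf{U}(t,s)\mathsf{U}(s,r)$ adapted to pullbacks (a dual/reversed ordering). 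Alternatively, one can verify it directly on observables as above.
\item Property 3, continuity in $(t,s)$, is trivial: the index set is $\Whole$ with the discrete topology, so every map on it is continuous.
\end{itemize}

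\emph{Main obstacle.} The only delicate point is property 2, namely reconciling the order of composition of Koopman-type (pullback) operators with the forward ordering $\mathsf{U}(t,s)\mathsf{U}(s,r)$ in Definition~\ref{def:EvolutionFamily}. I would handle it by stating the identity at the level of observables, as in Step 3. If strict agreement with the definition is required, I would note that the adjoint family $\mathsf{U}(t,s)^*$, the transfer operators on $\Lp(\domY)$, satisfies the forward ordering literally.

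A secondary issue is that the family is only defined for $t\ge s\ge 0$ if $\domT=\Natural$. For $\Whole$-indexed families one either extends $\varphi$ to negative times or restricts the index set.
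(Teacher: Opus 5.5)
Your Step 1 and your checks of properties 1 and 3 are fine. The problem is property 2: the family you construct does not satisfy it as stated in Definition~\ref{def:EvolutionFamily}. With $\mathsf{U}(t,s)=\mathsf{G}(s)\mathsf{G}(s+1)\cdots\mathsf{G}(t-1)$ you get $\mathsf{U}(t,r)=\mathsf{U}(s,r)\mathsf{U}(t,s)$, not $\mathsf{U}(t,s)\mathsf{U}(s,r)$. For instance, $\mathsf{U}(r+2,r)=\mathsf{G}(r)\mathsf{G}(r+1)$ while $\mathsf{U}(r+2,r+1)\mathsf{U}(r+1,r)=\mathsf{G}(r+1)\mathsf{G}(r)$. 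These differ in general: with $\varphi_r(y)=2y$ and $\varphi_{r+1}(y)=y+1$ on $\Real$, they send $h$ to $h(2y+1)$ and $h(2y+2)$, respectively.

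Neither of your workarounds closes this. Saying the identity ``should be read with a reversed ordering'' replaces the definition rather than verifying it. The adjoint fallback changes the statement: $\mathsf{U}(t,s)^*$ does satisfy the forward cocycle, but its one-step operators are $\mathsf{G}(t)^*$, not the $\mathsf{G}(t)$ the lemma prescribes. As written, you have shown that the pullback family is a \emph{backward} evolution family, which is not the lemma.

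The missing observation is that the lemma leaves no freedom in the construction. Imposing $\mathsf{U}(t+1,t)=\mathsf{G}(t)$ together with property 2, and iterating $\mathsf{U}(t,s)=\mathsf{U}(t,t-1)\mathsf{U}(t-1,s)$, forces $\mathsf{U}(t,s)=\mathsf{G}(t-1)\mathsf{G}(t-2)\cdots\mathsf{G}(s)$. For this family, property 2 holds trivially by splitting the ordered product. That is the paper's route, spelled out in the proof of Proposition~\ref{prop:EvolutionOperatorControlSystem}. It uses only boundedness of each $\mathsf{G}(t)$ and never needs the family to represent propagation along trajectories.

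Your ordering concern is nevertheless correct and worth keeping as a remark. The forced family acts as $h\mapsto h\circ\varphi_s\circ\cdots\circ\varphi_{t-1}$, which is not $h(y_t)$ as a function of $y_s$. Also, the final expression in \eqref{eq:TVEvoSemigroupSubstituted} corresponds to your ordering, not to the product written in the line above it.

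Finally, Step 1 inherits the boundedness claim of Proposition~\ref{proposition:Semigroup}. Non-singularity only makes $h\mapsto h\circ\varphi_t$ well defined on $L^2_\mu$-classes; boundedness additionally requires $d(\mu\circ\varphi_t^{-1})/d\mu\in L^\infty_\mu$. This affects the paper's argument equally.
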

\begin{lemma}[Time-Varying Evolution Semigroup]\label{lemma:TVSystemSemigroup}
    Let $\xi \in \sequencespace=\ell_\infty(\Whole, \Lp(\domY))$  be the sequence observable on the extended space of sequences of observables.
    The Evolution Family from Lemma~\ref{lemma:EFamilyTVSystem} induces a evolution semigroup on $\sequencespace$ defined as
    \begin{align}
        [\mathsf{S}(t)\xi](s) & =\mathsf{U}(s, s-t)\xi (s-t)                                                                                \\
                              & =\mathsf{G}(s-1)\mathsf{G}(s-2)\cdots \mathsf{G}(s-t)\xi (s-t)                                              \\
                              & =h_{s-t}\circ \varphi(s-1, \cdot)\circ \cdots \circ \varphi(s-t, \cdot)\label{eq:TVEvoSemigroupSubstituted}
        .
    \end{align}
    \begin{proofsketch}
        The proof immediately follows from the definitions by substituting~\eqref{eq:TVEvoSemigroupSubstituted}.
    \end{proofsketch}
\end{lemma}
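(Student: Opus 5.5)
The plan is to obtain the statement as an instance of Lemma~\ref{lemma:EvolutionSemigroup} applied to the evolution family from Lemma~\ref{lemma:EFamilyTVSystem}, with $\mathcal{F}=\Lp(\domY)$. After that, the three displayed equalities are unwound one at a time. The first equality is the definition \eqref{eq:EvolutionSemigroup}. We therefore only need to check that $\{\mathsf{U}(t,s)\}$ satisfies the hypotheses of Definition~\ref{def:EvolutionFamily}, which is the content of Lemma~\ref{lemma:EFamilyTVSystem}. We also need $\mathsf{S}(t)$ to map $\sequencespace=\ell_\infty(\Whole,\Lp(\domY))$ into itself. For that I would require a uniform bound $\sup_{s}\|\mathsf{U}(s,s-t)\|<\infty$ for each fixed $t$. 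This follows from $\|\mathsf{U}(s,s-t)\|\le\prod_{j=1}^t\|\mathsf{G}(s-j)\|$, provided $\sup_t\|\mathsf{G}(t)\|<\infty$. I would add this uniform-boundedness condition on the composition operators $h\mapsto h\circ\varphi(t,\cdot)$ explicitly, since Proposition~\ref{proposition:Semigroup} gives boundedness only for each $t$ separately.

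For the second equality, I would expand $\mathsf{U}(s,s-t)$ by induction on $t$ using property~2 of Definition~\ref{def:EvolutionFamily}:
\begin{align*}
\mathsf{U}(s,s-t)=\mathsf{U}(s,s-1)\,\mathsf{U}(s-1,s-t)=\mathsf{G}(s-1)\,\mathsf{U}(s-1,s-t).
\end{align*}
The base case $t=0$ is $\mathsf{U}(s,s)=\identity$, which corresponds to the empty product. Iterating the factorization yields the product $\mathsf{G}(s-1)\cdots\mathsf{G}(s-t)$.

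For the third equality, I set $h_{s-t}:=\xi(s-t)$ and apply the product to it. The rightmost operator $\mathsf{G}(s-t)$ acts first and gives $h_{s-t}\circ\varphi(s-t,\cdot)$. Each subsequent $\mathsf{G}(r)$ precomposes with $\varphi(r,\cdot)$ on the inside. Because precomposition reverses the order, the innermost map is $\varphi(s-1,\cdot)$. The result is $h_{s-t}\circ\varphi(s-1,\cdot)\circ\cdots\circ\varphi(s-t,\cdot)$, matching \eqref{eq:TVEvoSemigroupSubstituted}. This is consistent with the trajectory starting at time $s-t$ and $\varphi(s-t,\cdot)$ being applied to the state first.

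Finally, I would verify the semigroup law $\mathsf{S}(t+r)=\mathsf{S}(t)\mathsf{S}(r)$ directly:
\begin{align*}
[\mathsf{S}(t)\mathsf{S}(r)\xi](s)=\mathsf{U}(s,s-t)\,\mathsf{U}(s-t,s-t-r)\,\xi(s-t-r)=\mathsf{U}(s,s-t-r)\,\xi(s-t-r),
\end{align*}
where the last step uses property~2 again. Together with $\mathsf{S}(0)=\identity$, this gives the semigroup.

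I expect the main obstacle to be bookkeeping rather than depth. The order of composition in the third equality is easy to get backwards, and the well-definedness on $\ell_\infty$ needs the uniform bound above. Continuity in $t$ is trivial in discrete time.
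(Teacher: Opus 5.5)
The argument breaks at the third equality, and your text contradicts itself there. Everything before it is fine: the first two equalities, the semigroup law, and your remark that $\mathsf{S}(t)$ maps $\sequencespace$ into itself only if $\sup_r\|\mathsf{G}(r)\|<\infty$ (a hypothesis the statement does omit).

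Your bookkeeping is right. When you apply $\mathsf{G}(s-1)\cdots\mathsf{G}(s-t)$ to $h_{s-t}$, each further factor $\mathsf{G}(r)$ inserts $\varphi(r,\cdot)$ on the inside, so the map that hits the argument first is $\varphi(s-1,\cdot)$. But that yields
\[
\mathsf{G}(s-1)\cdots\mathsf{G}(s-t)\,h_{s-t}=h_{s-t}\circ\varphi(s-t,\cdot)\circ\varphi(s-t+1,\cdot)\circ\cdots\circ\varphi(s-1,\cdot).
\]
The expression in \eqref{eq:TVEvoSemigroupSubstituted} instead has innermost map $\varphi(s-t,\cdot)$. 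Your claims that the result ``matches'' and that $\varphi(s-t,\cdot)$ is applied first are the opposite of what you just derived.

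The two expressions differ as soon as the maps do not commute. Take $\domY=\Real$ with Lebesgue measure, $\varphi(0,y)=y+1$, $\varphi(1,y)=2y$, and $s=t=2$. Then $[\mathsf{G}(1)\mathsf{G}(0)h](y)=[\mathsf{G}(0)h](2y)=h(2y+1)$, while $[h\circ\varphi(1,\cdot)\circ\varphi(0,\cdot)](y)=h(2y+2)$. For $h$ the indicator of $[0,1]$ these are different elements of $L^2(\Real)$.

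So the second and third lines of the statement are incompatible under the paper's own conventions, namely $\mathsf{G}(r)f=f\circ\varphi(r,\cdot)$ together with the forward law $\mathsf{U}(t,r)=\mathsf{U}(t,s)\mathsf{U}(s,r)$. The paper's one-line sketch (``substitute the definitions'') passes over exactly this point, and no argument can prove the chain as written.

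The cause is that pullbacks are contravariant. The trajectory-consistent propagator is the reversed product $\mathsf{V}(t,s):=\mathsf{G}(s)\mathsf{G}(s+1)\cdots\mathsf{G}(t-1)$, for which $\mathsf{V}(t,s)h=h\circ\varphi(t-1,\cdot)\circ\cdots\circ\varphi(s,\cdot)$. It obeys the backward law $\mathsf{V}(t,r)=\mathsf{V}(s,r)\mathsf{V}(t,s)$, not property~2 of Definition~\ref{def:EvolutionFamily}.

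A correct version has two options:
\begin{itemize}
\item Keep the formal forward family and replace the last line by the reversed composition displayed above. That expression is then not the observable propagated along trajectories.
\item Use $\mathsf{V}$ with the shift reversed, $[\mathsf{S}(t)\xi](s):=\mathsf{V}(s+t,s)\,\xi(s+t)$. This is a semigroup by the backward law, and it still needs your uniform bound on $\|\mathsf{G}(r)\|$.
\end{itemize}
Merely reversing the product inside $\mathsf{U}(s,s-t)\xi(s-t)$ does not work. In that case $\mathsf{S}(1)^2\neq\mathsf{S}(2)$ whenever $\mathsf{G}(s-1)$ and $\mathsf{G}(s-2)$ fail to commute for some $s$.
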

The evolution semigroup on the extended space of sequences of observables recovers the semigroup structure.
While this allows for analysis, it is not a convenient structure for learning:
When learning the evolution semigroup $\mathsf{S}(t)$, we run into the problem that $\sequencespace$ is not a Hilbert space and is not separable. When learning the composition operator $\mathsf{K}:\Lp(\domY) \to \Lp(\domY \times \domT)$, we run into the issue that $\domT$ is not compact.
It is not surprising that those spaces pose issues for learning, as they describe the evolution for all times directly.
A possible remedy is to restrict attention to a finite time interval $[0, T]\subset\domT$, modeling periodic behavior or transients by simply neglecting long-term behavior.

If, however, we have more information about the time dependence of the system and can attribute it to exogenous variables $u(t)\in\domU$, we can uncover a more convenient structure of the evolution operator $\mathsf{G}(t)$.
\begin{definition}[Control System]\label{def:ControlSystem}
    Let $\domY$ be a domain, $\domU$ be a control domain. The transition map ${\varphi}: \domY \times \domU \to \domY$ defines a dynamical system $y_{t+1} = {\varphi}(y_t, u_t)$. Let $\varphi$ be non-singular and absolutely continuous.
\end{definition}
We define the evolution of the observation in analogy to \eqref{eq:TVSystem}, identifying $t$ with $u_t$:
\begin{align}\label{eq:ControlledObservableEvolution}
    y_{t+1} = \langle h| \eval_{y,u}\rangle = h(\varphi(y_t, u_t)) = [\mathsf{K}h](y_t, u_t)
    .
\end{align}
In this case, the operator $\mathsf{K}$ maps observables from $\Lp(\domY)$ to $\Lp(\domY \times \domU)$ via composition with the control transition map $\varphi$.
Obviously, this operator does not yet admit the semigroup or evolution family structure.
The next statements show that with an additional composition, we recover said structure.
\begin{definition}[Control Koopman Operator~{\cite[Def.~1]{bevandaNonparametricControlKoopman2025}}]\label{definition:ControlKoopmanOperator}
    The control Koopman operator $\mathsf{K}: \Lp(\domY) \to \Lp(\domY \times \domU)$ is defined as $[\mathsf{K}h](y, u) = h(\varphi(y, u))$.
\end{definition}
\begin{definition}[Control Operator]\label{definition:ControlOperator}
    Define the partial composition operator $\mathsf{C}(t): \Lp(\domY \times \domU) \to \Lp(\domY)$ as $[\mathsf{C}(t) g](y) = g(y, u(t))$, where $u(t):\domT \to \domU$.
\end{definition}
\begin{proposition}[Control Evolution Operator]\label{prop:EvolutionOperatorControlSystem}
    In analogy to Definition~\ref{def:DynamicalSystem}, the control evolution operators $\mathsf{G}(t)=\mathsf{C}(t)\mathsf{K}: \Lp(\domY) \to \Lp(\domY)$ are the generators of an evolution family $\{\mathsf{U}(t,s)\}_{t,s\in \Whole}$ of evolution operators.

    \begin{proof}
        Linearity and boundedness of $\mathsf{G}$ follow as in Proposition~\ref{proposition:Semigroup}.
        Define $\mathsf{U}(t,s) = \mathsf{G}(t-1)\mathsf{G}(t-2)\cdots \mathsf{G}(s)$ for $t > s$ and $\mathsf{U}(t,t) = \identity$.
        Then $\mathsf{U}(t,s) = \mathsf{U}(t,r)\mathsf{U}(r,s)$ for $t \geq r \geq s$ by splitting the product.
    \end{proof}
\end{proposition}
\begin{proposition}[Control Evolution Semigroup]
    Let $\xi \in \ell_\infty(\Whole, \Lp(\domY))$  be the sequence observable on the extended space of sequences of observables.
    Then the Control Evolution Family from Proposition~\ref{prop:EvolutionOperatorControlSystem} induces a evolution semigroup on $\ell_\infty(\Whole, \Lp(\domY))$ defined as
    \begin{align}
        [\mathsf{S}(t)\xi](s) =\mathsf{U}(s, s-t)\xi (s-t)
        .
    \end{align}

    \begin{proof}
        The identity element is $\mathsf{S}(0)$, as by definition $[\mathsf{S}(0)\xi](s)=\mathsf{U}(s,s)\xi(s)=\xi(s)$.
        The semigroup property follows from $[\mathsf{S}(t+r)\xi](s) = \mathsf{U}(s, s-t-r)\xi(s-t-r) = \mathsf{U}(s, s-t)\mathsf{U}(s-t, s-t-r)\xi(s-t-r) = [\mathsf{S}(t)\mathsf{S}(r)\xi](s)$ for all $t,r,s \in \Whole$.
    \end{proof}
\end{proposition}
\subsection{Structure Of Special Cases}
\begin{example}[Semigroup of a Linear Control System]\label{example:LinearControlSystem}
    Let $\domY = \Real$, $\domU = \Real$, and $\varphi(y, u) = y + u$, defining the control system $y_{t+1} = ay_t + bu_t$.
    On the observable $h: y \mapsto y$, the control Koopman operator acts as $[\mathsf{K}h](y, u) = ay + bu$.
    The action of the control evolution semigroup $\mathsf{S}(t)$ on a time-uniform observable sequence $\xi = (h)_{s\in\Whole}\in\ell_\infty(\Whole, L_2(\domY))$ is $\mathsf{S}(1)\xi = (ay + bu_{s-1})_{s\in\Whole}$, a sequence of functions describing how $y$ was evolved from the previous time step.
    Applying $\mathsf{S}(t)$ yields $\mathsf{S}(t)\xi = (a^t y + \sum_{i=1}^t a^{i-1}bu_{s-i})_{s\in\Whole}$, describing how $y$ was evolved from $t$ time steps ago.
\end{example}
\begin{example}[Autonomous Systems]
    For autonomous systems $x_{t+1} = \varphi(x_t)$ as in Definition~\ref{def:DynamicalSystem}, there is no control input, so the control Koopman operator $\mathsf{G}$ reduces to the Koopman operator $\mathsf{S}(1): \Lp(\domX) \to \Lp(\domX)$ with $[\mathsf{S}(1)h](x) = h(\varphi(x))$, the partial composition operator $\mathsf{C}(t)$ becomes the identity, and the evolution operators $\mathsf{G}(t) = \mathsf{S}(1)$ are time-invariant.
    The evolution family thus satisfies $\mathsf{U}(t,s) = \mathsf{S}(1)^{t-s}$, depending only on the difference $t-s$, recovering the semigroup $\{\mathsf{S}(1)^t\}_{t\in\Natural}$ from Proposition~\ref{proposition:Semigroup}.
\end{example}
\begin{example}[Time-Varying Systems]
    Let $\varphi: \Whole\times\domY \to \domY$ be a $\mu${-}measurable time-varying transition map defining a time-varying system via $y_{t+1} = \varphi(t, y_t)$.
    On the observable $h \in L_2(\domY)$ the evolution operator at time $t$ acts as $[\mathsf{G}(t)h](y) = h(\varphi_t(y))$.
    The action of the evolution semigroup $\mathsf{S}(t)$ on a time-uniform observable sequence $\xi = (h)_{s\in\Whole}\in\ell_\infty(\Whole, L_2(\domY))$ is $\mathsf{S}(1)\xi = (h \circ \varphi_{s-1})_{s\in\Whole}$, a sequence of functions describing how $h$ was evolved from the previous time step.
    Applying $\mathsf{S}(t)$ yields $\mathsf{S}(t)\xi = (h \circ \varphi_{s-t} \circ \cdots \circ \varphi_{s-1})_{s\in\Whole}$, describing how $h$ was evolved through $t$ successive composition with the time-varying transition map.
\end{example}
\section{LEARNING LINEAR EVOLUTION EQUATIONS AS AN INVERSE PROBLEM}
Coming back to the modeling approaches given by Ansatz~\ref{ansatz:EvolutionOfOutput} and Ansatz~\ref{ansatz:EvolutionOfObservable}, we look at the corresponding Inverse Problem~\ref{problem:InverseProblem} they instantiate:
\begin{problem}[Output Inverse Problem]\label{problem:StateSpaceInverse}
With Ansatz~\ref{ansatz:EvolutionOfOutput}, the model consists of the evolved observation function $\mathsf{M} = h' \in \hypothesis: \domY \times \domU \to \domY$, the encoder is the evaluation functional $\encoder = \langle \cdot | \eval_{y,u} \rangle: \hypothesis \to \domY$, and the target represents the successors $\mathsf{T} = y'$.
Find $h'$ such that
    \ensuremath{\encoder h' = y'.}
\end{problem}
\begin{problem}[Operator Inverse Problem]\label{problem:OperatorInverse}
Under Ansatz~\ref{ansatz:EvolutionOfObservable}, the model is the evolution operator $\mathsf{M}:\prehypothesis\to\hypothesis$, the encoder is the evaluation functional $\mathsf{E} = \langle \cdot | \eval_{y,u} \rangle: \hypothesis \to \outputspace$, and the target maps observables to their successor values $\mathsf{T}: \prehypothesis\to\outputspace$.
Find $\mathsf{M}$ such that
    \ensuremath{\mathsf{E}\mathsf{M} = \mathsf{T}.}
\end{problem}
Note that the encoder doubles as $\encoder: \hypothesis \to \outputspace$ and left composition $\encoder(\model)=\encoder\model$, leading to  $\encoder: \modelspace = \HS(\prehypothesis, \hypothesis) \to \targetspace = \HS(\prehypothesis, \outputspace)$, which connects Problem~\ref{problem:OperatorInverse} to the abstract framework of Problem~\ref{problem:InverseProblem}.

By comparing the inverse problems, we can identify three advantages of the operator learning Ansatz~\ref{ansatz:EvolutionOfObservable} over the function learning Ansatz~\ref{ansatz:EvolutionOfOutput}:
\begin{itemize}
    \item It is more general, as the evaluation is not specified, and different modes of evaluation are encompassed.
          Two important examples are the point evaluation $\encoder h=\eval_{y}(h)=h(y)$ and the expectation of an observable $\encoder h=\eval_{\mu}(h) =\mathbb{E}_{Y\sim\mu}h(Y)$ - both are linear operators, and allow for the same model.
    \item The target space $\targetspace$ naturally models both population-level quantities $\Lp[2][\mu](\domY)$ as well as samples $y_i\in\domY^N$, allowing for the construction of algorithms that are well defined in the data limit and thus for any dataset.
    \item The linear evolution structure simplifies analysis.
\end{itemize}
These advantages come at the cost of needing to represent elements of one additional hypothesis space $\prehypothesis\subset \Lp[2][\mu](\domX)$.

\subsection{Inverse Problems For Operators}
The main idea for solving Problem~\ref{problem:OperatorInverse} in a unified way is recognizing that, under reasonable conditions, the model and target spaces are Hilbert-Schmidt -- thus elements of Hilbert spaces themselves.
This makes the rich theory of inverse problems apply.
The first condition is on the data-generating process. It immediately applies to the operators on non-sequence spaces from previous sections.
\begin{condition}[Generating Process]\label{condition:GeneratingProcess}
    Let $\world: \Lp[2][\mu] \to \Lp[2][\nu]$ be a bounded linear operator relating the populations: the vectors in the measure spaces from which samples are drawn.
\end{condition}
The second condition ensures that $\prehypothesis$ is small enough.
\begin{condition}[Hypothesis]\label{condition:Hypothesis}
    Let $\prehypothesis, \hypothesis$ be Hilbert spaces, such that $\encodertarget\in \HS(\prehypothesis, \Lp[2][\nu])$.
\end{condition}
This can be ensured by letting $\prehypothesis$ be finite-dimensional or a universal RKHS.
Together, these requirements allow us to specify the model spaces.
\begin{proposition}[$\HS$-Model]\label{proposition:HSModel}
    Let Conditions~\ref{condition:GeneratingProcess}\&\ref{condition:Hypothesis} hold. Each linearly regularized model is Hilbert-Schmidt.

    \begin{proof}
        As the regularization $\regularization_\gamma: (\prehypothesis\to\targetspace)\to\modelspace$ is linear and bounded, we can represent the model as $\model_R:= \regularization_\gamma \target=\regularization_\gamma \world\encodertarget$. The statement follows, as $\encodertarget$ is Hilbert-Schmidt and Hilbert-Schmidt operators are ideals in the bounded operators~\cite[Proposition 12.1.2]{aubinAppliedFunctionalAnalysis2000}.
    \end{proof}
\end{proposition}

\begin{remark}[Representation Error]\label{remark:RepresentationError}
    Condition~\ref{condition:Hypothesis} favors a small hypothesis space. Consequently, choosing a very small hypothesis $\prehypothesis$ simplifies estimation. However, this comes at a cost when using $\model$ as a surrogate for $\world$, for example, for spectral analysis or forecasting: Using the target $\target=\world\encodertarget:\Lp[2][\mu]\supset\prehypothesis\to \Lp[2][\nu]$ instead of $\world:\Lp[2][\mu]\to \Lp[2][\nu]$ produces a representation error $\|\target - \world\|$. This dichotomy has been identified in~\cite{kosticSharpSpectralRates2023} and commands the search for `good' representations.
\end{remark}
\begin{remark}[Hilbert-Schmidt Target Operators]\label{remark:HilbertSchmidtTarget}
    If $\mathsf{T}$ is not Hilbert-Schmidt, more sophisticated estimates are needed. They lead to weaker convergence~\cite{mollenhauerStatisticalApproximationConditional2022} and are at the forefront of statistical learning.
\end{remark}
\subsection{Error Decomposition}\label{section:ErrorDecomposition}
We will be interested in the quality of the estimators for $\model$. To this end, we measure the error $\error(\model )=\|\encoder\model-\world\|$.
To identify the components of this error, we decompose it into parts arising from the representation, inaccurate targets $\target^\epsilon$, associated regularization $\regularization_\gamma$, finite data $N$, and the hypothesis on finite subspaces $\hypothesis_m\subset\hypothesis$.
Denote by $\model_m^\dagger\in\mathrm{HS}(\prehypothesis, \hypothesis_m)$ the best-approximate, by $\regularization_{\gamma,m}: \HS (\prehypothesis, \hypothesis_m) \to \targetspace$ the regularization and by $\model_{\gamma,m,N}^\epsilon$ the regularized estimator from $N$ samples with deviation level $\epsilon$.
\begin{proposition}[Error Decomposition]\label{prop:ErrorDecomposition}
    Let $c>0$. The total error of the estimator $\model_{\gamma,m,N}^\epsilon$ satisfies
    \begin{align}\label{eq:ErrorDecomposition}
        \error(\model_{\gamma,m,N}^\epsilon)
         & \leq \underbrace{\|\target-\world\|}_{\text{representation}}
        + \underbrace{\|\encoder\model^\dagger-\target\|}_{\text{least squares}} \nonumber                             \\
         & + \underbrace{\|\model^\dagger - \model_m^\dagger\|}_{\text{approximation}}
        + \underbrace{\|\model_m^\dagger - \regularization_{\gamma,m}\target\|}_{\text{regularization bias}} \nonumber \\
         & + \underbrace{\|\regularization_{\gamma,m}\| \cdot \epsilon}_{\text{aleatoric}}
        + \underbrace{\|\regularization_{\gamma,m}\target^\epsilon - \model_{\gamma,m,N}^\epsilon\|}_{\text{estimation}}
        .
    \end{align}

    \begin{proofsketch}
        The proof follows by adding and subtracting intermediate terms and repeatedly applying the triangle inequality. The move from an error in target space to an error in model space is $\error(\model )=\|\encoder\model-\world\|=\|\target-\world + \encoder\model^\dagger-\target + \encoder(\model - \model^\dagger)\|$. As $\encoder$ is bounded, we set $\|\encoder\|=1$. This immediately gives the representation and least-squares terms. The remaining terms follow immediately by using $\|\mathsf{T}-\mathsf{T}^\epsilon\|\leq\epsilon$. The norms are operator norms and are upper-bounded by the Hilbert-Schmidt norms if finite.
    \end{proofsketch}
\end{proposition}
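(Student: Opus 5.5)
The bound follows from a telescoping decomposition of $\encoder\model_{\gamma,m,N}^\epsilon-\world$, after which the triangle inequality is applied term by term. The intermediate objects are inserted in the order listed in \eqref{eq:ErrorDecomposition}: $\target$, $\encoder\model^\dagger$, $\encoder\model_m^\dagger$, $\encoder\regularization_{\gamma,m}\target$ and $\encoder\regularization_{\gamma,m}\target^\epsilon$. This gives
\begin{align*}
\encoder\model_{\gamma,m,N}^\epsilon-\world
&=(\target-\world)+(\encoder\model^\dagger-\target)+\encoder(\model_m^\dagger-\model^\dagger)\\
&\quad+\encoder(\regularization_{\gamma,m}\target-\model_m^\dagger)+\encoder\regularization_{\gamma,m}(\target^\epsilon-\target)\\
&\quad+\encoder(\model_{\gamma,m,N}^\epsilon-\regularization_{\gamma,m}\target^\epsilon).
\end{align*}
The first two summands already live in target space and give the representation and least-squares terms directly. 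The remaining four summands are differences of models pushed through $\encoder$.

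For these four terms I use boundedness of $\encoder$. The encoder acts by left composition $\model\mapsto\encoder\model$, so on $\HS(\prehypothesis,\hypothesis)$ it satisfies $\|\encoder\model\|_{\HS}\le\|\encoder\|\,\|\model\|_{\HS}$. Following the statement's convention, I normalize $\|\encoder\|=1$; in general the constant $c>0$ absorbs $\|\encoder\|$ as a multiplicative factor on the last four terms. This yields the approximation, regularization-bias and estimation terms. For the aleatoric term I also use linearity and boundedness of $\regularization_{\gamma,m}$, as in Proposition~\ref{proposition:HSModel}, together with $\|\target^\epsilon-\target\|\le\epsilon$. This gives $\|\regularization_{\gamma,m}(\target^\epsilon-\target)\|\le\|\regularization_{\gamma,m}\|\epsilon$.

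The main obstacle is consistency of spaces and norms, rather than any estimate. First, $\model_m^\dagger\in\HS(\prehypothesis,\hypothesis_m)$ has to be viewed inside $\HS(\prehypothesis,\hypothesis)$ via the inclusion $\hypothesis_m\subset\hypothesis$, which is isometric. Second, $\target$ and $\world$ have different domains, $\prehypothesis$ versus $\Lp[2][\mu]$. So the representation term must be read as $\|\target-\world|_{\prehypothesis}\|$, or with $\target$ extended by the embedding, as in Remark~\ref{remark:RepresentationError}. Third, Hilbert-Schmidt membership of every term follows from Proposition~\ref{proposition:HSModel} and Condition~\ref{condition:Hypothesis}. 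I then measure everything in the operator norm, which is dominated by the Hilbert-Schmidt norm whenever the latter is finite. This settles the norm choice and completes the bound.
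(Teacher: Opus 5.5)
Your proposal is correct and takes the same route as the paper's proof sketch: you insert the intermediate terms $\target$, $\encoder\model^\dagger$, $\encoder\model_m^\dagger$, $\encoder\regularization_{\gamma,m}\target$, $\encoder\regularization_{\gamma,m}\target^\epsilon$, apply the triangle inequality with $\|\encoder\|=1$, and use $\|\target-\target^\epsilon\|\le\epsilon$ for the aleatoric term. Your version is more explicit than the sketch in two useful places: you invoke linearity of $\regularization_{\gamma,m}$ to get $\|\regularization_{\gamma,m}(\target^\epsilon-\target)\|\le\|\regularization_{\gamma,m}\|\,\epsilon$, which the general regularization definition (continuity only) does not supply, and you read the otherwise unused constant $c$ as absorbing $\|\encoder\|$.
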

The specific error decomposition is not unique; nonetheless, the decomposition above highlights the existing bias-variance trade-offs.
The first trade-off is between the representation error and the least-squares error.
Next, the second trade-off involves the approximation error and the remaining terms.
Finally, the third trade-off is between the regularization bias and the aleatoric term.
The error estimates can be used to derive optimal schedules $\gamma=\alpha(\epsilon, \target^\epsilon)$ that make both terms vanish, when $\epsilon\to 0$.
In practice, however, this never happens as numerical precision gives a lower bound on $\epsilon$.
\subsection{Estimators For Special Cases}
\begin{example}[Control Koopman Operator]\label{example:CKOEstimator}
    Let $\hypothesis = \hypothesis(\domY)\otimes\hypothesis(\domU)$ and $\prehypothesis = \hypothesis(\domY)$ be separable RKHSs.
    The model is the control Koopman operator $\model = \mathsf{K}: \prehypothesis \to \hypothesis$ from Definition~\ref{definition:ControlKoopmanOperator},
    the encoder is the evaluation functional $\encoder = \langle \cdot | \eval_{y,u} \rangle: \hypothesis \to \outputspace$,
    and the target maps observables to their successor values $\target = \mathsf{K}\encodertarget: \prehypothesis \to \outputspace$.
    Given data $\data = \{y_i, u_i, y'_i\}_{i=1}^N$,
    the sample encoder is $\encoder_N: \hypothesis \to \domY^N$, $[\encoder_N h]_i = h(y_i, u_i)$,
    and the sample target is $\target_N \in \domY^N$, $[\target_N]_i = y'_i$.
    Learning $\model$ is an instance of Problem~\ref{problem:OperatorInverse}.
    The Tikhonov regularized best-approximate solution is given by
    \begin{align}\label{eq:EstimatorCKO}
        \model & = (\encoder_N^*\encoder_N + \gamma \identity_N)^{-1}\encoder_N^*\target_N                                      \\
               & = \encoder_N^* (\encoder_N\encoder_N^* + \gamma \identity_N)^{-1} \target_N\label{eq:EstimatorCKO:GramianForm}
        ,
    \end{align}
    using the Woodbury identity to obtain~\eqref{eq:EstimatorCKO:GramianForm}.

    The control operator $\mathsf{C}(t)$ does not need to be learned. By separability of $\hypothesis$ it is $\mathsf{C}(t)=\identity\otimes\langle\cdot|\eval_{u_t}\rangle$.
    Separability is necessary for statistical guarantees, but does not prevent $\hypothesis(\domY)\otimes\hypothesis(\domU)$ from being universal for continuous functions on $C(\domY\times\domU)$~\cite{ingosteinwartSupportVectorMachines2008}.
    As the separable encoder reads $\encoder_N=\encoder_{N, \hypothesis(\domY)}\otimes\encoder_{N, \hypothesis(\domU)}$, the evolution operator is $\mathsf{G}(t)=\mathsf{C}(t)\model=\encoder^*_{N, \hypothesis(\domY)}\langle \cdot |\eval_{u_t}\rangle \encoder^*_{N, \hypothesis(\domU)} (\encoder_N\encoder_N^* + \gamma \identity_N)^{-1} \target_N$,

    \noindent where $\langle \cdot |\eval_{u_t}\rangle \encoder^*_{N, \hypothesis(\domU)}=[\langle \eval_{u_i}, \eval_{u_t}\rangle]_{i=1}^N$
\end{example}
The estimator from Example~\ref{example:CKOEstimator} recovers the one of~\cite{bevandaNonparametricControlKoopman2025}. As a special case, let $\domU=\{u_1, \dots,u_m\}$ be finite, and choose the Kronecker kernel $\delta(u_i,u_j)=\{1: u_i{=}u_j, 0:u_i{\neq}u_j$ to represent the inner product $\langle\eval_{u_i}|\eval_{u_j}\rangle$. The estimator~\eqref{eq:EstimatorCKO:GramianForm} then decouples into $m$ estimators, one for each $u_i$. This gives a well-defined estimator for the approach in~\cite{nuskeFiniteDataErrorBounds2023}. In the same vein, as in their approach, one can then interpolate the operators. However, this interpolation can be incorporated directly into learning to leverage correlations between inputs by choosing a non-diagonal kernel $k(u_i,u_j)$. Our analysis provides a clear path to extending their bounds to account for small deviations in the data by including regularization, demonstrating that the framework we propose can seamlessly model and improve upon existing approaches.
\begin{example}[TV Koopman Operator]
    Let $I=[0, T]\subset\Natural$ be a finite time interval and let $\hypothesis = \hypothesis(\domY)\otimes\hypothesis(I)$ and $\prehypothesis = \hypothesis(\domY)$ be separable RKHSs.
    The model is the TV Koopman operator $\model = \mathsf{K}: \prehypothesis \to \hypothesis$ from Lemma~\ref{lemma:EFamilyTVSystem},
    the encoder is the evaluation functional $\encoder = \langle \cdot | \eval_{y,t} \rangle: \hypothesis \to \outputspace$,
    and the target maps observables to their successor values $\target = \mathsf{K}\encodertarget: \prehypothesis \to \outputspace$.
    Given data $\data = \{t_i, y_i, y'_i\}_{i=1}^N$,
    the sample encoder is $\encoder_N: \hypothesis \to \domY^N$, $[\encoder_N h]_i = h(y_i, t_i)$,
    and the sample target is $\target_N \in \domY^N$, $[\target_N]_i = y'_i$.
    Learning $\model$ is an instance of Problem~\ref{problem:OperatorInverse}.
    The Tikhonov-regularized best-approximate solution is again given by~\eqref{eq:EstimatorCKO} and~\eqref{eq:EstimatorCKO:GramianForm}, with the encoders substituted.
    The restriction to a finite-time interval $I$ is necessary from a statistical perspective as $\domT$ is not compact. The time evaluation operator, analogous to the control operator $\mathsf{C}(t)$ from Definition~\ref{definition:ControlOperator}, does not need to be learned.
    The evolution operator becomes $\mathsf{G}(t)=(\identity\otimes\langle \cdot |\eval_{t}\rangle)\model=\encoder^*_{N, \hypothesis(\domY)}\langle \cdot |\eval_{t}\rangle \encoder_{N, \hypothesis(I)} (\encoder_N\encoder_N^* + \gamma \identity_N)^{-1} \target_N$.
\end{example}
Both estimators become computational when the domains are specified. A possible choice is again separable spaces $\hypothesis=\hypothesis (\domY)\otimes \hypothesis (\domT)$ with $\hypothesis (\domT)$ modeling, i.e., continuous or periodic functions. The estimator on finite-dimensional spaces $\prehypothesis, \hypothesis$ is~\eqref{eq:EstimatorCKO}, otherwise~\eqref{eq:EstimatorCKO:GramianForm}, see~\cite{bevandaNonparametricControlKoopman2025} for further details. This emphasizes again that the analysis encompasses both RKHSs defined via finite-dimensional embeddings and infinite-dimensional universal RKHSs
Although we provide examples of Tikhonov regularized estimators, one can choose from a variety of regularization schemes~\cite{englRegularizationInverseProblems1996}.

\subsection{Classification of Approaches from the Literature}
\begin{table*}[t]
    \caption{Classification of selected approaches by their analysis of the bias-Variance trade-offs in the error decomposition~\eqref{eq:ErrorDecomposition}}
    \label{tab:tradeoffs}
    \centering
    \footnotesize \begin{tabular}{@{}lccc@{}}
        \toprule
        \textbf{Method}                                                       & \textbf{Representation}
                                                                              & \textbf{Approximation}
                                                                              & \textbf{Regularization}                                                         \\
                                                                              & \footnotesize{$\|\target{-}\world\|$
                                                                                    \textit{vs.}
                                                                                    $\|\encoder\model^\dagger{-}\target\|$}
                                                                              & \footnotesize{$\|\model^\dagger{-}\model_m^\dagger\|$
                                                                                    \textit{vs.}
                                                                                    remaining}
                                                                              & \footnotesize{reg.\ bias
                                                                                    \textit{vs.}
                                                                                    $\|\regularization_{\gamma,m}\|\cdot\delta$}                                \\
        \midrule
        \multicolumn{4}{@{}l}{\textit{Autonomous}}                                                                                                              \\
        \addlinespace[2pt]
        Williams et al.~\cite{williamsDataDrivenApproximationKoopman2015}     & $\times$                                              & \checkmark & $\times$   \\
        Kostic et al.\cite{kosticSharpSpectralRates2023}                      & \checkmark                                            & \checkmark   & \checkmark \\
        Mollenhauer~\cite{mollenhauerStatisticalApproximationConditional2022} & \checkmark                                            & \checkmark & \checkmark \\
        \midrule
        \multicolumn{4}{@{}l}{\textit{Control}}                                                                                                                 \\
        \addlinespace[2pt]
        Gr{\"u}nwälder et al.~\cite{grunewalderModellingTransitionDynamics2012}   & \checkmark                                            & \checkmark & \checkmark \\
        N{\"u}ske et al.~\cite{nuskeFiniteDataErrorBounds2023}                     & $\times$                                              & \checkmark & $\times$   \\
        Haseli and Cortes~\cite{haseliModelingNonlinearControl2023}           & $(\checkmark)$                                              & \checkmark & $\times$   \\
        Bevanda et al.~\cite{bevandaNonparametricControlKoopman2025}           & \checkmark                                            & \checkmark & \checkmark \\
        \bottomrule
    \end{tabular}%
\end{table*}

The framework we described can be used to classify learning approaches from the literature in terms of them addressing the errors from Proposition~\ref{prop:ErrorDecomposition}.

When it comes to modeling autonomous systems, advances and effective algorithms have been constructed by reconciling operator theory with learning theory and inverse problems, see~\cite{kosticLearningDynamicalSystems2022, bevandaKoopmanKernelRegression2023, devergneBiasedUnbiasedDynamics2024, kosticConsistentLongTermForecasting2024, giannakisPhysicsinformedSpectralApproximation2025a} for recent examples on algorithms with convergence guarantees and strategies explicitly built for data sampling measures.

Modeling systems with exogenous inputs via the Koopman operator approach~\cite{koopmanHamiltonianSystemsTransformation1931}, has received increased attention in the control community throughout the last decades~\cite{mezicApplicationsSpectralTheory2015, bruntonDataDrivenScienceEngineering2019, bevandaKoopmanOperatorDynamical2021, strasserOverviewKoopmanbasedControl2026}. Within this paradigm, the goal is to build a model according to Ansatz~\ref{ansatz:EvolutionOfObservable} in the special case of the operator $\mathsf{K}$ being defined via Ansatz~\ref{ansatz:EvolutionOfOutput}.
The operator $[\mathsf{K}h](x, u)=h(f(x, u))=h\circ f (x, u)$ then becomes a composition operator as in Definition~\ref{definition:ControlKoopmanOperator}.
This approach is naturally encompassed in the framework we introduced through the control evolution operator, cf. Proposition~\ref{prop:EvolutionOperatorControlSystem}.
In terms of deriving structures for learning in control systems, the current literature often fails to reconcile operator theory and learning theory or inverse problems, leading to algorithms that are not well defined in the data limit\footnote{For RKHS methods, Gram matrices grow with sample size. As the population limit is a compact kernel integral operator. Compactness implies that the spectrum decays to zero. Thus, inverting Gram matrices becomes increasingly unstable, as in the limit, no bounded inverse exists.}
or neglecting possible representation error, cf. Remark~\ref{remark:RepresentationError}. See Table~\ref{tab:tradeoffs} for a selection of recent approaches that account for the bias-variance trade-offs in the error decomposition~\eqref{eq:ErrorDecomposition} when deriving guarantees. We note that, while~\cite{haseliModelingNonlinearControl2023} optimizes for representation error, they do so within a finite-dimensional subspace. It would thus be interesting whether their methods can be augmented to allow for convergent estimation of $\world$.

\section{Conclusions}
In this paper, we proposed a structured approach to learning linear operators for dynamical systems.
The design process consists of three steps: Derive the structure of the evolution operator using the theory of evolution equations, choose a structure amenable to learning, and derive and analyze learning algorithms as solutions to the associated inverse problem.
We demonstrated that autonomous, time-varying, and controlled dynamics share the common structure of evolution families and semigroups. We identified structures for learning and analysis, in particular, the control Koopman operator and the evolution semigroup on sequence spaces.
We demonstrated the power of composing both frameworks by deriving novel estimators for time-varying systems as direct instances of the general construction, enabling the use of standard algorithms.
We analyzed learning algorithms through an error decomposition and identified which bias-variance trade-offs existing methods account for and which they neglect.
In particular, we identify algorithms for control systems that are inspired by EDMD, which focus on approximation but do not account for representation and regularization trade-offs. This provides a promising avenue to make future iterations of these algorithms more robust.

\emph{Limitations.}
We identified important trade-offs, but did not quantify them.
Further, while the estimators are guaranteed to be Hilbert-Schmidt, they need not inherit the structural properties of the true operator, such as positivity, unitarity, dissipativity, or the Markov property.
Downstream algorithms may thus fail despite a small error.

\emph{Future Work.}
As stated, extending to infinitesimal generators and their semigroups is natural in the proposed framework. The evolution semigroup structure provides an avenue for stability analysis of time-varying, forced or controlled models. Following up, studying how stability can be certified for the true system from data is of interest. Finally, it seems crucial to leverage the trade-off between representation and estimation error in order to obtain good representations.
\section{ACKNOWLEDGMENTS}
This work is supported by the DAAD programme Konrad
Zuse Schools of Excellence in Artificial Intelligence, sponsored by the Federal Ministry of Education and Research, by the European Union’s Horizon Europe innovation action program under grant agreement No.101093822, "SeaClear2.0" and funded by the Deutsche Forschungsgemeinschaft (DFG, German Research Foundation) – Project 535860958: Active Learning for Systems and Control (ALeSCo) -- Data Informativity, Uncertainty, and Guarantees.

\bibliographystyle{abbrv}
\bibliography{literature}

\end{document}